\newtheorem{proposition}{Proposition}
\theoremstyle{plain}
\begin{document}
\bibliographystyle{IEEE2}
\title{Game Theoretic Analysis for Joint Sponsored and Edge Caching Content Service Market}
\author{Authors name}
\author{Zehui Xiong$^1$, Shaohan Feng$^1$, Dusit Niyato$^1$, Ping Wang$^2$, Amir Leshem$^3$ and Yang Zhang$^4$\\
$^1$School of Computer Science and Engineering, Nanyang Technological University (NTU), Singapore\\
$^2$Department of Electrical Engineering and Computer Science, York University, Canada\\
$^3$Faculty of Engineering, Bar-Ilan University, Israel\\
$^4$School of Computer Science and Technology, Wuhan University of Technology, China}
\maketitle

\begin{abstract}
With a sponsored content scheme in a wireless network, a sponsored content service provider can pay to a network operator on behalf of the mobile users/subscribers to lower down the network subscription fees at the reasonable cost in terms of receiving some amount of advertisements. As such, content providers, network operators and mobile users are all actively motivated to participate in the sponsored content ecosystem. Meanwhile, in 5G cellular networks, caching technique is employed to improve content service quality, which stores potentially popular contents on edge networks nodes to serve mobile users. In this work, we propose the joint sponsored and edge caching content service market model. We investigate an interplay between the sponsored content service provider and the edge caching content service provider under the non-cooperative game framework. Furthermore, a three-stage Stackelberg game is formulated to model the interactions among the network operator, content service provider, and mobile users. 
Sub-game perfect equilibrium in each stage is analyzed by backward induction. The existence of Stackelberg equilibrium is validated by employing the bilevel optimization programming. Based on the game properties, we propose a sub-gradient based iterative algorithm, which ensures to converge to the Stackelberg equilibrium.
\end{abstract}

\begin{IEEEkeywords}
Sponsored content, edge caching, content delivery network, Stackelberg game, bilevel optimization.
\end{IEEEkeywords}

\section{Introduction}\label{Sec:Introduction}
\textit{Sponsored content} in wireless networks has been introduced as a promising scheme to utilize a proper designed incentive mechanism to motivate mobile users to use more services and consume more contents, generating higher revenue to the mobile service ecosystem. For example, AT\&T launched a \textit{sponsored data plan} that allows content service providers to partly or fully pay for the data usages on behalf mobile users to the network operator, when the mobile users access the sponsored contents~\cite{AT&T}. With the sponsored content scheme, mobile users as the end users of network content services can offload their data subscription fees to content service providers. Correspondingly, content service providers can potentially cooperate with network providers by sponsoring data usages to motivate mobile users to join the network~\cite{xiong2017sponsor, xiong2018competition, wang2017hierarchical}.

In the meanwhile, \textit{edge caching} is emerging in the context of 5G networks as a promising technique to deliver high quality content services at lower costs. The key idea is that the edge caching content service provider caches the content on edge devices of the network in advance. Mobile users who request for contents may access the contents directly at the edge devices, without costly remote content deliveries. With edge caching, mobile users may obtain content services via local connections, e.g., WiFi, such that no cellular data usage will be incurred. Therefore, edge caching can be viewed as an alternative sponsorship scheme for mobile users. Edge caching is seen as on of the most effective solutions to deal with highly explosive content traffic demands in 5G era. 
As a result, equipping caches at small base stations offers a promising way to exploit the potential of edge networks in addition to densifying the existing cellular networks.

In this work, we introduce the integration of the edge caching and sponsored content schemes. In particular, we investigate their interaction and their coexistence that affect the mobile user behavior, the network operator in the data/content traffic market. To sum up, the main contributions of this paper are summarized as follows:
	\begin{enumerate}
	\item We formulate a joint sponsored and edge caching content service market model to analyze the interactions among the wireless network operator, the sponsored content service provider as well as the edge caching content service provider, and mobile users.

	\item We formulate a novel three-stage Stackelberg game to model their interactions to jointly maximize the payoff of the wireless network operator, the profit of each content service providers, and the utilities of mobile users.

	\item Through backward induction, we analyze the sub-game perfect equilibrium in each stage analytically. Furthermore, the existence of the Stackelberg equilibrium is validated by capitalizing on the bilevel optimization technique.
	\end{enumerate}

The rest of the paper is organized as follows. Section~\ref{Sec:Description} presents the system description and Section~\ref{Sec:Game} formulates a three-stage Stackelberg game to model the joint sponsored and edge caching content service market. The equilibrium analysis are shown in Section~\ref{Sec:Solution}. Section~\ref{Sec:Simulation} provides the performance evaluation. Section~\ref{Sec:Conclusion} concludes the paper.

\section{System Model} \label{Sec:Description}
We consider a market-oriented content delivery network consisting of four types of entities: a Wireless Network Operator (WNO), a Sponsored Content Service Provider (SCSP), an Edge Caching Content Service Provider (ECCSP), and a pool of Mobile Users (MUs). The MUs can access contents (e.g., video) from the SCSP and ECCSP. If an MU accesses contents from the SCSP, the contents can be downloaded directly through the network infrastructure provided by the WNO. In this case, the SCSP can sponsor for the data transfer from the WNO to the MU. In the mean time, the MU can also access contents stored in an edge caching device belonged to the ECCSP. The MU can download the contents from the device locally without involving the WNO. With the coexistence of both SCSP and ECCSP, the MU who requests for contents may autonomously choose the content source to lower down the content cost. The SCSP and ECCSP need to compete with each other to attract content demands from the users.

Let $y$ denote the content (volume) demanded by MUs, and ${\sigma_e}f(y)$ denote the utility obtained from accessing and consuming the content, where the factor ${\sigma _e}>0$ represents the utility coefficient of MUs, e.g., a particular valuation between MUs and content. Similar to that in~\cite{joe2015sponsoring}, we first define the following function:
	\begin{equation}\label{Eq:fx}
	f(y) = \frac{1}{{1 - \alpha }}{y^{1 - \alpha }},
	\end{equation}
where $0< \alpha <1$ is a given coefficient. In particular, the function $f(\cdot)$ is non-decreasing and concave with decreasing marginal satisfaction, indicating the decreasing marginal preference of MUs to contents. In traditional wireless content access, the WNO charges each MU a price $p$ for a unit volume of content downloaded. Thus, the general form of utility function of the MU with $y$ content demand is given defined as $v(y) = {\sigma _e} f(y) - py$.

With a sponsored content scheme, the payment from MUs to the WNO can be partly sponsored by the SCSP, as introduced in Section~\ref{Sec:Introduction}. Denote $\theta \in [0,1]$ as a sponsorship factor of content sponsored by the SCSP, i.e., $\theta$ units of the content is sponsored. Thus, the MU pays for the rest $(1-\theta)y$ units of content to the WNO, with the cost $(1 - \theta )py$ incurred to the MU~\cite{xiong2017sponsor, xiong2018competition}. Generally, the MUs are also affected by another variable $l_{a}$ which is the amount of advertisement imposed by the SCSP and ECCSP per volume of content. We assume that $l_a$ is constant for all contents. For example, Pandora Internet Radio plays advertisement at regular intervals between songs. We assume a normalized $l_a \in [0, 1]$ since both the ECCSP and SCSP have the amount of advertisements strictly less than that of the provided contents. For the ease of derivation later, we introduce an auxiliary variable $\tau$ defined as $\tau = \frac{1}{{1 + l_a}}, \tau \in\left[\frac{1}{2},1\right]$. Thus, the utility of the MU which has the content demand $y$ from the SCSP is expressed as $u_s(y) = \tau{\sigma _e} f(y) - (1 - \theta )py$.

An ECCSP is able to cache contents in the edge caching devices for MUs to access via a local network connection. We denote $t\in [0,1]$ as the caching effort of the ECCSP, which indicates the sponsorship from the ECCSP to MUs. When an MU access the cached content from the ECCSP, the advertisement from the ECCSP is imposed to the MU which lowers its utility. Nevertheless, accessing the cached content, the MU does not need to pay the WNO~\cite{pang2016joint}. Accordingly, the utility of the MU which has demand $z$ for the cached content from the ECCSP is expressed as ${u_e}(z) = \tau {\sigma _e}f(z)g(t) - cz$, where $c$ is the network handover cost. The caching effort $t$ indicates the amount of resources (e.g., storage and bandwidth) allocated for delivering the cached content to the MUs. The function $g(t)$ is defined as content delivery quality. In particular, $g(t)$ is a monotonically increasing function with diminishing returns reflecting the positive influence of caching effort $t$ on MUs' experience. Similar to~(\ref{Eq:fx}), we adopt the a common function to capture the Quality of Service (QoS) experienced by MUs, i.e.,
	\begin{equation}\label{Eq:gt}
		g(t) = \frac{1}{{1 - \beta }}{t^{1 - \beta }},
	\end{equation}
where $0 < \beta < 1$ is defined to be a coefficient.

\section{Game Formulation for Joint Sponsored and Edge Caching Content Service Model}\label{Sec:Game}

In this section, we formulate a three-stage Stackelberg game to model the interactions among the network participants (i.e., WNO, SCSP, ECCSP, and MUs). In the following Sections~\ref{Subsec:A},~\ref{Subsec:B} and~\ref{Subsec:C}, we analyze the sub-game problems of each stage in the game.

	\subsection{Stage III: Mobile Users (Followers)}\label{Subsec:A}
	In the network, each MU may either choose to access the sponsored contents from the SCSP, or the cached contents provided by the ECCSP. The MU determines a fraction of the content demand to access from the SCSP denoted by $x \in [0,1]$. The fraction of cached content demand from the ECCSP is thus $1-x$.

	Each rational MU balances the content access between the SCSP and ECCSP to maximize its utility. The utility of the MU can be expressed as follows:
		\begin{equation}\label{Eq:utility_model2}
			{u}(x;\theta,t,p) =\tau{\sigma _e} f(x) - (1 - \theta )xp + \tau{\sigma _e} f(1 - x)g(t) -(1 - x)c,
		\end{equation}
	where $f(x)$ and $g(t)$ are defined in~(\ref{Eq:fx}) and~(\ref{Eq:gt}), respectively.

	Each MU sub-game problem can be formulated as follows, given the volume unit price asked by the WNO, the sponsorship factor $\theta$ from the SCSP and the caching effort $t$ from the ECCSP, the MU chooses $x$ to maximize the utility:\\
	{\bf{Problem 1. (The MU sub-game):}}
		\begin{equation}
		\begin{aligned}
		& \underset{x}{\text{maximize}}
		& & u(x;\theta,t,p) \\
		& \text{subject to}
		& & x \in [0 ,1].\\
		\end{aligned}
		\end{equation}

	\subsection{Stage II: SCSP and ECCSP (2nd-Tier Players)}\label{Subsec:B}
	At this stage, the SCSP and ECCSP both determine their individual strategy simultaneously in a competitive manner, given the pricing strategy of the WNO.

		\subsubsection{Sponsored Content Service Provider}
		The SCSP aims to maximize the profit sponsoring, i.e., the gained advertisement revenue minus the cost of sponsorship provided for the MUs, as follows:
			\begin{equation}\label{Eq:profit_ssp}
				\Pi_s(\theta;p) = {\sigma _c} h(x) - \theta px.
			\end{equation}
		where $\sigma _c$ is the advertisement revenue coefficient, and the term ${\sigma _c}h(x)$ denotes the advertisement revenue~\cite{joe2015sponsoring}, in which $h(x)$ is defined as follows:
			\begin{equation}\label{Eq:hx}
				h(x) = \frac{1}{{1 - \gamma }}{x^{1 - \gamma }}	,
			\end{equation}
		where $0 < \gamma < 1$ which is a coefficient. Thus, the SCSP sub-game is defined as follows:\\
		{\bf{Problem 2-A. (The SCSP sub-game):}}
			\begin{equation}
			\begin{aligned}
			& \underset{x}{\text{maximize}}
			& & \Pi_s(\theta;p) \\
			& \text{subject to}
			& & \theta \in [0 ,1].\\
			\end{aligned}
			\end{equation}

		\subsubsection{Edge caching content service provider}\label{Subsec:C}
		As discussed in Section~\ref{Sec:Description}, the cost incurred to the ECCSP for caching the content with caching effort $t$ is denoted as $C t$. The profit gained by the ECCSP, i.e., the advertisement revenue from content traffic minus the cost for content caching, is defined as follows:
			\begin{equation}\label{Eq:profit_esp}
			  \Pi_e(t;p) = {\sigma _c}h(1-x) - Ct,
			\end{equation}
		where $h(\cdot)$ is the same as defined in~(\ref{Eq:hx}). We assume $C$ to be a baseline content caching cost. The cost of the ECCSP for caching the content with caching effort $t$ is thus $C t$~\cite{de2017competitive, tamoor2016caching, pang2016joint}. Accordingly, the profit maximization problem of the ECCSP is formulated as follows:\\
		{\bf{Problem 2-B. (The ECCSP sub-game):}}
			\begin{equation}
			\begin{aligned}
			& \underset{x}{\text{maximize}}
			& & \Pi_e(t;p) \\
			& \text{subject to}
			& & t \in [0 ,1].\\
			\end{aligned}
			\end{equation}

	\subsection{Stage I: Wireless Network Operator (1st-Tier Player)}
	The WNO sets the volume unit price $p$ for data traffic. In addition to obtaining its revenue from charging the SCSP and MUs, the WNO also has the content delivery cost. Accordingly, the objective of the WNO is to maximize its payoff, expressed as follows:
		\begin{equation}\label{Eq:payoff_WNO}
			{\mathscr P}(p) = px - w x^2,
		\end{equation}
	where $w x^2$ denotes the corresponding cost, and $w$ represents the unit cost of content delivery. The quadratic sum form reflects the marginal cost increases as the total demand increases, e.g., due to congestion effects, which is a well-accepted assumption. The strategy space of the WNO is denoted as $\{{\bf P}: 0\le p \le \overline p\}$, where $\overline p$ is the maximum price. As a result, the payoff maximization problem of the WNO is formulated as follows:\\
	{\bf{Problem 3. (The WNO sub-game):}}
		\begin{equation}
		\begin{aligned}
			& \underset{x}{\text{maximize}}
			& & {\mathscr P}(p) \\
			& \text{subject to}
			& & p \in [0,\overline p].\\
		\end{aligned}
		\end{equation}

	{\bf Problems 1}, {\bf 2-A}, {\bf 2-B} and {\bf 3} altogether form a three-stage Stackelberg game with the objective of finding the Stackelberg equilibrium. In the next section, the sub-game problems will be solved sequentially. The Stackelberg equilibrium will be investigated.

\section{Game Equilibrium Analysis}\label{Sec:Solution}
	\begin{figure*}
	\begin{equation}\label{Eq:SSPfirstorder}
	\frac{{\partial {x^*}}}{{\partial \theta }} = \frac{p}{{\alpha \tau {\sigma _e}\left[ {{{x^*}^{ - \alpha - 1}} + \frac{{{t^{1 - \beta }}}}{{1 - \beta }}{{(1 - {x^*})}^{ - \alpha - 1}}} \right]}}>0.
	\end{equation}
	\begin{equation}\label{Eq:SSPsecondorder}
	\frac{{{\partial ^2}{x^*}}}{{\partial {\theta ^2}}} = \left[ { - {{x^*}^{ - \alpha - 2}} + \frac{{{t^{1 - \beta }}}}{{1 - \beta }}{{(1 - {x^*})}^{ - \alpha - 2}}} \right]\frac{{ - (\alpha + 1){p^2}}}{{{\alpha ^2}{\tau ^2}{\sigma _e}^2{{\left[ {{{x^*}^{ - \alpha - 1}} + \frac{{{t^{1 - \beta }}}}{{1 - \beta }}{{(1 - {x^*})}^{ - \alpha - 1}}} \right]}^3}}}.
	\end{equation}
	\begin{equation}\label{Eq:ESPfirstorder}
	\frac{{\partial {x^*}}}{{\partial t}} = \frac{{ - {t^{ - \beta }}{{(1 - {x^*})}^{ - \alpha }}}}{{\alpha \left[ {{{x^*}^{ - \alpha - 1}} + \frac{{{t^{1 - \beta }}}}{{1 - \beta }}{{(1 - {x^*})}^{ - \alpha - 1}}} \right]}}<0.
	\end{equation}
	\begin{multline}\label{Eq:ESPsecondorder}
	\frac{{{\partial ^2}{x^*}}}{{\partial {t^2}}} = \frac{{{t^{ - \beta }}{{(1 - {x^*})}^{ - \alpha }}}}{{\alpha \Big[ {{{x^*}^{ - \alpha - 1}} + \frac{{{t^{1 - \beta }}}}{{1 - \beta }}{{(1 - {x^*})}^{ - \alpha - 1}}} \Big]}}\Bigg\{ \frac{{2{t^{ - \beta }}{{(1 - {x^*})}^{ - \alpha - 1}}}}{{{{x^*}^{ - \alpha - 1}} + \frac{{{t^{1 - \beta }}}}{{1 - \beta }}{{(1 - {x^*})}^{ - \alpha - 1}}}} + \frac{\beta}{t} \\+ \frac{{ - (\alpha + 1){t^{ - \beta }}{{(1 - {x^*})}^{ - \alpha }}\Big[ { - {{x^*}^{ - \alpha - 2}} + \frac{{{t^{1 - \beta }}}}{{1 - \beta }}{{(1 - {x^*})}^{ - \alpha - 2}}} \Big]}}{{\alpha {{\Big[ {{{x^*}^{ - \alpha - 1}} + \frac{{{t^{1 - \beta }}}}{{1 - \beta }}{{(1 - {x^*})}^{ - \alpha - 1}}} \Big]}^2}}}
	\Bigg\}.
	\end{multline}
	\end{figure*}
	
In this section, we sequentially solve the sub-game perfect equilibrium in each stage of Stackelberg game by employing backward induction. Specifically, we analyze the content demand of MUs in Stage~III, the sponsoring strategy of the SCSP as well as the caching strategy of the ECCSP in Stage~II, and the pricing strategy of the WNO in Stage~I, respectively.

	\subsection{Stage III: MU's content demand}\label{Subsec:StageIII}
	Given the unit content downloading price $p$ charged by the WNO, sponsorship factor $\theta$ determined by the SCSP, as well as caching effort $t$ determined by the ECCSP, respectively, the MUs determine optimal content demands for utility maximization individually in the sub-game $\mathcal{G}^u$. Thus, we analyze the sub-game $\mathcal{G}^u$ by solving the~{\bf Problem~1}. By substituting $f(\cdot)$ as in~(\ref{Eq:fx}) and $g(\cdot)$ as in~(\ref{Eq:gt}) into~(\ref{Eq:utility_model2}), the utility of the MU with the decision variable $x$ is obtained as follows:
		\begin{multline}\label{Eq:Utility}
		u(x;\theta ,t,p)= \frac{{\tau {\sigma _e}}}{{1 - \alpha }}{x^{1 - \alpha }} + \frac{{\tau {\sigma _e}{t^{1 - \beta }}}}{{(1 - \alpha )(1 - \beta )}}{(1 - x)^{1 - \alpha }} \\- (1 - x)c - (1 - \theta )xp.
		\end{multline}

	In the next step, the first order and second order derivatives of~(\ref{Eq:Utility}) with respect to $x$ is taken to prove its concavity, as follows:
		\begin{equation}\label{Eq:FirstOrderUtility}
		\frac{{\partial {u}}}{{\partial {x}}} = \tau {\sigma _e}{x^{ - \alpha }} - \frac{{\tau {\sigma _e}{t^{1 - \beta }}}}{{1 - \beta }}{(1 - x)^{ - \alpha }} + c - (1 - \theta )p,
		\end{equation}
		\begin{equation}\label{Eq:SecondOrderUtility}
		\frac{{{\partial ^2}{u}}}{{\partial {x}^2}} = - \alpha \tau {\sigma _e}{x^{ - \alpha - 1}} - \frac{{\alpha \tau {\sigma _e}{t^{1 - \beta }}}}{{1 - \beta }}{(1 - x)^{ - \alpha - 1}}< 0.
		\end{equation}

	As shown in (\ref{Eq:FirstOrderUtility}), the second order derivative of $u(\cdot)$ is always negative. Therefore, the function $u$ is strictly concave with respect to $x$. As the strategy space of $x$ is already known to be a convex and compact subset of the Euclidean space, the following proposition can be concluded accordingly~\cite{han2012game}.
		\begin{proposition}
			The sub-game perfect equilibrium in the sub-game $\mathcal{G}^u$ is unique.
		\end{proposition}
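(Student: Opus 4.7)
The plan is to invoke the standard result that strict concavity of the payoff function on a convex, compact strategy space yields a unique best response, and to conclude that a single-player decision problem of this form has a unique equilibrium. Since the MU sub-game $\mathcal{G}^u$ is essentially a parametric optimization problem (each MU optimizes $u(x;\theta,t,p)$ in $x$ given the fixed stage-I and stage-II strategies), the equilibrium coincides with the set of maximizers of $u(\cdot;\theta,t,p)$ over $[0,1]$, so it suffices to prove uniqueness of that maximizer.

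First I would assemble the three ingredients of the classical existence/uniqueness theorem for concave games (e.g.\ Debreu/Rosen, as cited in~\cite{han2012game}): (i) the strategy set $[0,1]$ is a nonempty, convex, and compact subset of $\mathbb{R}$; (ii) the utility $u(x;\theta,t,p)$ is continuous in $x$ on $[0,1]$, which follows directly from the explicit form in~(\ref{Eq:Utility}) since $x^{1-\alpha}$ and $(1-x)^{1-\alpha}$ are continuous on $[0,1]$; and (iii) $u(x;\theta,t,p)$ is strictly concave in $x$, which has already been established through the sign of the second derivative in~(\ref{Eq:SecondOrderUtility}). In particular, $\partial^2 u/\partial x^2$ is strictly negative for every $x\in(0,1)$ because both $-\alpha\tau\sigma_e x^{-\alpha-1}$ and $-\alpha\tau\sigma_e t^{1-\beta}(1-x)^{-\alpha-1}/(1-\beta)$ are strictly negative under the assumptions $0<\alpha,\beta<1$, $t\in[0,1]$, $\tau\in[1/2,1]$, $\sigma_e>0$.

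Next I would argue that continuity on the compact set $[0,1]$ gives existence of a maximizer via the Weierstrass theorem, and strict concavity gives uniqueness: if two distinct maximizers $x_1\neq x_2$ existed, the strict inequality $u(\lambda x_1+(1-\lambda)x_2;\theta,t,p) > \lambda u(x_1;\theta,t,p)+(1-\lambda)u(x_2;\theta,t,p)=\max_x u$ for $\lambda\in(0,1)$ would contradict maximality. Because the MU sub-game has only one decision maker's type optimizing independently, this unique maximizer constitutes the unique sub-game perfect equilibrium of $\mathcal{G}^u$, and I would conclude with a reference to~\cite{han2012game}.

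I do not expect any serious obstacle: the heavy lifting (the sign of the second derivative) is already done in~(\ref{Eq:SecondOrderUtility}), so the proof is essentially a one-line citation of the strict-concavity/compactness argument. The only subtlety worth addressing explicitly is boundary behavior at $x=0$ and $x=1$, where the derivative terms $x^{-\alpha}$ and $(1-x)^{-\alpha}$ blow up; however, $u$ itself remains finite and continuous on $[0,1]$ since $1-\alpha>0$, so the argument goes through without modification and, in fact, the unique maximizer will lie in the interior whenever it is characterized by the first-order condition~(\ref{Eq:FirstOrderUtility}).
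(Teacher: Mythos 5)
Your proposal is correct and follows essentially the same route as the paper: the paper likewise concludes uniqueness directly from the strict negativity of the second derivative in~(\ref{Eq:SecondOrderUtility}) together with the convexity and compactness of the strategy space $[0,1]$, citing~\cite{han2012game}. Your write-up merely makes explicit the Weierstrass existence step, the standard two-maximizer contradiction, and the boundary behavior, all of which the paper leaves implicit.
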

	Furthermore, based on the first order derivative condition, we have
		\begin{equation}\label{Eq:FirstOrderDerivative}
		\frac{{\partial {u}}}{{\partial {x}}} = \tau {\sigma _e}{x^{ - \alpha }} - \frac{{\tau {\sigma _e}{t^{1 - \beta }}}}{{1 - \beta }}{(1 - x)^{ - \alpha }} + c - (1 - \theta )p = 0,
		\end{equation}
	and we can show that
		\begin{equation}\label{Eq:UtilityEquilibrium}
		{x^{ - \alpha }} - \frac{{{t^{1 - \beta }}}}{{1 - \beta }}{(1 - x)^{ - \alpha }} = \frac{1}{{\tau {\sigma _e}}}\bigg[ {\left( {1 - \theta } \right)p - c} \bigg].
		\end{equation}

	The following conclusions hold. $x^* = 1$ given the best response of the MU is larger than $1$; $x^* = 0$ given the best response of the MU is less than $0$. If the best response of the MU with respect to the content demand $x^*$ is within the strategy space $[0, 1]$, the best response of the MU, i.e., $x^*$, satisfies the condition in~(\ref{Eq:UtilityEquilibrium}), i.e.,
		\begin{equation}\label{Eq:UtilityEquilibriumForm}
		{{x^*}^{ - \alpha }} - \frac{{{t^{1 - \beta }}}}{{1 - \beta }}{(1 -{x^*})^{ - \alpha }} = \frac{1}{{\tau {\sigma _e}}}\left[ {\left( {1 - \theta } \right)p - c} \right].
		\end{equation}

	\subsection{Stage II: SCSP's sponsoring and ECCSP's caching strategy}\label{Subsec:StageII}
	Given the sub-game equilibrium in $\mathcal{G}^u$ obtained from the Stage~III, the SCSP and ECCSP as 2nd-tier players optimize the sponsoring and caching strategies for profit maximization in a competitive manner, respectively. The optimal strategies of both the SCSP and ECCSP are obtained by solving the sub-game {\bf Problems 2-A} and {\bf 2-B}.

	Firstly, we analyze the optimal sponsoring strategy of the SCSP. Based on (\ref{Eq:profit_ssp}), the profit obtained by the SCSP can be reformulated as follows:
		\begin{equation}\label{Eq:sspprofit}
		\Pi_s(\theta;p) = {\sigma _c}\frac{1}{{1 - \gamma }}{{x^*}^{1 - \gamma }} - \theta px^*,
		\end{equation}
	where $x^*$ is the best response of the MU. The first and second order derivatives of profit ${\Pi _s}(\theta ;p)$ with respect to the sponsorship factor $\theta$ are given as follows:
		\begin{equation}\label{Eq:FirstOrderProfit_s}
		\frac{{\partial {\Pi _s}(\theta ;p)}}{{\partial \theta }} = {\sigma _c}{{x^*}^{ - \gamma }}\frac{{\partial {x^*}}}{{\partial \theta }} - p{x^*} - \theta p\frac{{\partial {x^*}}}{{\partial \theta }},
		\end{equation}
	and
		\begin{multline}\label{Eq:SecondOrderProfit_s}
		\frac{{{\partial ^2}{\Pi _s}(\theta ;p)}}{{\partial {\theta ^2}}} = - \gamma {\sigma _c}{{x^*}^{ - \gamma - 1}}{\left( {\frac{{\partial {x^*}}}{{\partial \theta }}} \right)^2} + {\sigma _c}{{x^*}^{ - \gamma }}\frac{{{\partial ^2}{x^*}}}{{\partial {\theta ^2}}} \\- 2p\frac{{\partial {x^*}}}{{\partial \theta }} - \theta p\frac{{{\partial ^2}{x^*}}}{{\partial {\theta ^2}}}.
		\end{multline}

	From the condition in~(\ref{Eq:UtilityEquilibrium}), we obtain $\frac{{\partial {x^*}}}{{\partial \theta }}$ and $\frac{{{\partial ^2}{x^*}}}{{\partial {\theta ^2}}}$ with simple steps, as shown in~(\ref{Eq:SSPfirstorder}) and~(\ref{Eq:SSPsecondorder}).

	Likewise, we analyze the optimal caching strategy of the ECCSP. From~(\ref{Eq:profit_esp}), the profit obtained by the ECCSP can be expressed as follows:
		\begin{equation}\label{Eq:espprofit}
		{\Pi_e}(t;p) = {\sigma _c}\frac{1}{{1 - \gamma }}{(1 - {x^*})^{1 - \gamma }} - Ct.
		\end{equation}
	The first and second derivatives of profit $\Pi_e(t;p)$ with respect to caching effort $t$ are given as follows:
		\begin{equation}\label{Eq:FirstOrderProfit_c}
		\frac{{\partial {\Pi_e}(t;p)}}{{\partial t}} = {\sigma _c}{(1 - {x^*})^{ - \gamma }}\frac{{\partial {x^*}}}{{\partial t}} - C,
		\end{equation}
	and
		\begin{multline}\label{Eq:SecondOrderProfit_c}
		\frac{{{\partial ^2}{\Pi_e}(t;p)}}{{\partial {t^2}}} = - \gamma {\sigma _c}{(1 - {x^*})^{ - \gamma - 1}}{\left( {\frac{{\partial {x^*}}}{{\partial t}}} \right)^2} \\- {\sigma _c}{(1 - {x^*})^{ - \gamma }}\frac{{{\partial ^2}{x^*}}}{{\partial {t^2}}}.
		\end{multline}

	From the condition in~(\ref{Eq:UtilityEquilibrium}), we derive $\frac{{\partial {x^*}}}{{\partial t}}$ and $\frac{{{\partial ^2}{x^*}}}{{\partial {t^2}}}$ with simple steps, as shown in~(\ref{Eq:ESPfirstorder}) and~(\ref{Eq:ESPsecondorder}). By analyzing the profits of the SCSP and ECCSP given in~(\ref{Eq:sspprofit}) and~(\ref{Eq:espprofit}), respectively, we have the following proposition.
		\begin{proposition}
		The existence of the Nash equilibrium in the non-cooperative sub-game $\mathcal{G}^{c}$ is guaranteed if the following conditions
		\begin{equation}\label{Eq:Condition1}
		{\sigma _c}{{x^*}^{ - \gamma }} - \theta p > 0,
		\end{equation}
		\begin{equation}\label{Eq:Condition2}
		- {{x^*}^{ - \alpha - 2}} + \frac{{{t^{1 - \beta }}}}{{1 - \beta }}{(1 - {x^*})^{ - \alpha - 2}}>0,
		\end{equation}
		and
		\begin{equation}\label{Eq:Condition3}
		\gamma + \alpha - 1 > 0
		\end{equation}
		hold.
		\end{proposition}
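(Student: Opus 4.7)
The plan is to invoke the Debreu--Fan--Glicksberg theorem for existence of a pure-strategy Nash equilibrium in the competitive sub-game $\mathcal{G}^c$ between the SCSP and the ECCSP. The strategy spaces $\theta\in[0,1]$ and $t\in[0,1]$ are non-empty, convex and compact subsets of $\mathbb{R}$, and the Stage~III best response $x^*(\theta,t,p)$ is twice continuously differentiable in $(\theta,t)$ on the interior by the implicit function theorem applied to~(\ref{Eq:UtilityEquilibriumForm}); hence $\Pi_s$ and $\Pi_e$ are jointly continuous in the strategy profile. It therefore suffices to establish strict own-concavity: $\partial^2 \Pi_s/\partial\theta^2<0$ and $\partial^2 \Pi_e/\partial t^2<0$.

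For the SCSP, I would regroup~(\ref{Eq:SecondOrderProfit_s}) as
\begin{equation*}
\frac{\partial^2 \Pi_s}{\partial \theta^2} = -\gamma \sigma_c {x^*}^{-\gamma-1}\left(\frac{\partial x^*}{\partial \theta}\right)^2 - 2p\,\frac{\partial x^*}{\partial \theta} + \bigl(\sigma_c {x^*}^{-\gamma}-\theta p\bigr)\frac{\partial^2 x^*}{\partial \theta^2}.
\end{equation*}
The first term is strictly negative because $\gamma>0$; the second is strictly negative by the positivity shown in~(\ref{Eq:SSPfirstorder}). For the third, condition~(\ref{Eq:Condition2}) together with the explicit formula~(\ref{Eq:SSPsecondorder}) (whose outer factor is negative) yields $\partial^2 x^*/\partial\theta^2<0$, while condition~(\ref{Eq:Condition1}) makes the coefficient $\sigma_c {x^*}^{-\gamma}-\theta p$ positive; so the whole term is negative and $\partial^2 \Pi_s/\partial\theta^2<0$ follows.

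For the ECCSP, the same type of regrouping of~(\ref{Eq:SecondOrderProfit_c}) reduces strict concavity to verifying $\partial^2 x^*/\partial t^2\ge 0$, since the first summand is already negative. Inside the braces of~(\ref{Eq:ESPsecondorder}), the first two contributions are manifestly positive, while the third carries the sign of $-\bigl[-{x^*}^{-\alpha-2}+\frac{t^{1-\beta}}{1-\beta}(1-x^*)^{-\alpha-2}\bigr]$, which is negative under condition~(\ref{Eq:Condition2}). Condition~(\ref{Eq:Condition3}), i.e.\ $\gamma+\alpha-1>0$, is then the algebraic lever used to show the positive first two summands dominate this negative third summand, yielding $\partial^2 x^*/\partial t^2\ge 0$ and thereby $\partial^2 \Pi_e/\partial t^2<0$.

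Once strict own-concavity is established for both players, all hypotheses of the Debreu--Fan--Glicksberg theorem are met, and existence of a pure-strategy Nash equilibrium of $\mathcal{G}^c$ follows. The main obstacle is the ECCSP step: the sign of $\partial^2 x^*/\partial t^2$ depends on a delicate balance among three terms of mixed sign, and deploying condition~(\ref{Eq:Condition3}) cleanly seems to require a careful algebraic manipulation of the three terms in~(\ref{Eq:ESPsecondorder}) together with the equilibrium identity~(\ref{Eq:UtilityEquilibriumForm}) to eliminate $\tau\sigma_e$ in favour of the ratios involving ${x^*}^{-\alpha}$ and $(1-x^*)^{-\alpha}$ so that the exponent relation $\gamma+\alpha>1$ becomes exploitable.
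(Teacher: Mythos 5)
Your overall framing (compact convex strategy sets, continuity, strict own-concavity, then a concave-game/Debreu--Fan--Glicksberg existence argument) and your treatment of the SCSP coincide with the paper's proof. The gap is in the ECCSP step. You reduce strict concavity of $\Pi_e$ to showing $\frac{\partial^2 x^*}{\partial t^2}\ge 0$ on the grounds that the first summand of~(\ref{Eq:SecondOrderProfit_c}) is ``already negative'' and can be discarded. This reduction is too strong and cannot be closed by condition~(\ref{Eq:Condition3}): the expression~(\ref{Eq:ESPsecondorder}) for $\frac{\partial^2 x^*}{\partial t^2}$ contains no $\gamma$ whatsoever, so the condition $\gamma+\alpha-1>0$ has no leverage on its sign. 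Indeed, under condition~(\ref{Eq:Condition2}) the third term inside the braces of~(\ref{Eq:ESPsecondorder}) is negative and can dominate the first two (for example, as $x^*\to 1$ the bracketed sum behaves like $\tfrac{1}{t}\bigl[\tfrac{\alpha-1}{\alpha}(1-\beta)+\beta\bigr]$, which is negative for small $\alpha$), so $\frac{\partial^2 x^*}{\partial t^2}\ge 0$ is simply not implied by the hypotheses.

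The term you discarded is exactly what rescues the argument. The paper keeps it: negativity of~(\ref{Eq:SecondOrderProfit_c}) is equivalent to positivity of the combination $\gamma(1-x^*)^{-1}\bigl(\frac{\partial x^*}{\partial t}\bigr)^2+\frac{\partial^2 x^*}{\partial t^2}$ as in~(\ref{Eq:step1}), and substituting~(\ref{Eq:ESPfirstorder}) and~(\ref{Eq:ESPsecondorder}) into this \emph{sum} produces~(\ref{Eq:step2}), in which $\gamma$ enters through the squared first-derivative term and combines with the $\alpha$-dependent coefficients to give $(\gamma+2\alpha)$ and $(\gamma+\alpha-1)$ as the coefficients of the three (then all positive) summands. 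That is where condition~(\ref{Eq:Condition3}) is actually used. To repair your proof, replace the reduction to $\frac{\partial^2 x^*}{\partial t^2}\ge 0$ with the combined inequality~(\ref{Eq:step1}) and carry out the substitution leading to~(\ref{Eq:step2}).
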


		\begin{proof}
		From~(\ref{Eq:SSPsecondorder}), we can easily know that $\frac{{{\partial ^2}{x^*}}}{{\partial {\theta ^2}}}<0$ under the condition in~(\ref{Eq:Condition2}). Further, we have $\frac{{\partial {x^*}}}{{\partial \theta }} >0$ and $
		\frac{{{\partial ^2}{\Pi _s}(\theta ;p)}}{{\partial {\theta ^2}}} = - \gamma {\sigma _c}{{x^*}^{ - \gamma - 1}}{\left( {\frac{{\partial {x^*}}}{{\partial \theta }}} \right)^2} - (\sigma _c {{x^*}^{ - \gamma }}-\theta p)\frac{{{\partial ^2}{x^*}}}{{\partial {\theta ^2}}} - 2p\frac{{\partial {x^*}}}{{\partial \theta }}$.
		Consequently, we can conclude that $\frac{{{\partial ^2}{\Pi _s}(\theta ;p)}}{{\partial {\theta ^2}}} < 0$ under the condition in~(\ref{Eq:Condition1}). Then, we analyze the properties of $\frac{{{\partial ^2}{\Pi_e}(t;p)}}{{\partial {t^2}}}$. From~(\ref{Eq:SecondOrderProfit_c}), in order to prove that $\frac{{{\partial ^2}{\Pi_e}(t;p)}}{{\partial {t^2}}}$ is negative, we need to prove that
		\begin{equation}\label{Eq:step1}
		\gamma {(1 - {x^*})^{ - 1}}{\left( {\frac{{\partial {x^*}}}{{\partial t}}} \right)^2} + \frac{{{\partial ^2}{x^*}}}{{\partial {t^2}}}>0.
		\end{equation}
		By substituting~(\ref{Eq:ESPfirstorder}) and~(\ref{Eq:ESPsecondorder}) into~(\ref{Eq:step1}), and with simple steps, we have
		\begin{multline}\label{Eq:step2}
		\underbrace {\left( {\gamma + 2\alpha } \right){{(1 - {x^*})}^{ - 1}}{{x^*}^{ - \alpha - 1}}}_{ > 0} + \underbrace {(\alpha + 1){{x^*}^{ - \alpha - 2}}}_{ > 0} \\+ (\gamma + \alpha - 1)\underbrace {\frac{{{t^{1 - \beta }}}}{{1 - \beta }}{{(1 - {x^*})}^{ - \alpha - 2}}}_{ > 0} >0.
		\end{multline}
		Accordingly, under the condition in~(\ref{Eq:Condition3}), the inequality given in~(\ref{Eq:step2}) is satisfied. Therefore, the negativity of $\frac{{{\partial ^2}{\Pi_e}(t;p)}}{{\partial {t^2}}}$ is proven.
		Accordingly, the non-cooperative sub-game $\mathcal{G}^{c}$ between the SCSP and ECCSP is a concave game, from which the existence of the Nash equilibrium in the sub-game $\mathcal{G}^{c}$ follows.
		\end{proof}

	Finally, we prove the uniqueness of the Nash equilibrium in the non-cooperative sub-game between the SCSP and ECCSP, as shown in the following proposition.
		\begin{proposition}	\label{prop.2}
		If there exists at least one Nash equilibrium in the non-cooperative sub-game $\mathcal{G}^{c}$, the Nash equilibrium is unique provided that the following condition
		\begin{equation}\label{Eq:Condition4}
		2\alpha - 1 > 0
		\end{equation}
		holds.
		\end{proposition}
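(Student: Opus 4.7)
The plan is to invoke Rosen's uniqueness theorem for concave $N$-person games: since Proposition~2 already shows that $\mathcal{G}^{c}$ is a concave game on a convex compact strategy set, uniqueness of the Nash equilibrium follows from diagonal strict concavity of the weighted pseudo-payoff, i.e., from the negative definiteness of the symmetric part of the Jacobian of the pseudo-gradient $\sigma(\theta,t) = \bigl(\partial \Pi_s/\partial\theta,\; \partial \Pi_e/\partial t\bigr)^{\!\top}$. Concretely, I would set
\begin{equation*}
J(\theta,t) = \begin{pmatrix} \dfrac{\partial^2 \Pi_s}{\partial \theta^2} & \dfrac{\partial^2 \Pi_s}{\partial \theta\, \partial t} \\[4pt] \dfrac{\partial^2 \Pi_e}{\partial t\, \partial \theta} & \dfrac{\partial^2 \Pi_e}{\partial t^2} \end{pmatrix},
\end{equation*}
and show $J + J^{\!\top} \prec 0$ on the admissible region carved out by the conditions in Proposition~2.

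The first step is to make the cross-partials explicit. Starting from (\ref{Eq:FirstOrderProfit_s}) and (\ref{Eq:FirstOrderProfit_c}), the cross derivatives reduce, after one differentiation, to expressions involving $\partial x^*/\partial \theta$, $\partial x^*/\partial t$ and the mixed derivative $\partial^2 x^*/\partial \theta\,\partial t$. The latter I would obtain by differentiating the implicit equilibrium condition~(\ref{Eq:UtilityEquilibriumForm}) once in $\theta$ and once in $t$, reusing the single-variable derivatives already computed in (\ref{Eq:SSPfirstorder})--(\ref{Eq:ESPsecondorder}). Since (\ref{Eq:SSPfirstorder}) gives $\partial x^*/\partial \theta > 0$ and (\ref{Eq:ESPfirstorder}) gives $\partial x^*/\partial t < 0$, the signs of the off-diagonal entries of $J$ are predictable and their magnitudes are controlled by the same denominator $x^{*-\alpha-1} + \tfrac{t^{1-\beta}}{1-\beta}(1-x^*)^{-\alpha-1}$ that dominates the diagonal entries.

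The second step is to establish negative definiteness of $J + J^{\!\top}$. Because the diagonal entries $\partial^2 \Pi_s/\partial\theta^2$ and $\partial^2 \Pi_e/\partial t^2$ are already negative by Proposition~2, it suffices to show $\det\bigl(J+J^{\!\top}\bigr) > 0$, i.e., that
\begin{equation*}
4 \,\frac{\partial^2 \Pi_s}{\partial \theta^2}\, \frac{\partial^2 \Pi_e}{\partial t^2} \;-\; \left(\frac{\partial^2 \Pi_s}{\partial \theta\, \partial t} + \frac{\partial^2 \Pi_e}{\partial t\, \partial \theta}\right)^{\!2} > 0.
\end{equation*}
This is the diagonal-dominance step and is where the new hypothesis $2\alpha - 1 > 0$ is used: after substituting the explicit forms and clearing common positive factors, the cross-term that must be dominated carries a factor $(1-\alpha)$ relative to the curvature terms generated by $f(\cdot)$, which accumulate a factor $\alpha$ from each of $x^{*-\alpha-1}$ and $(1-x^*)^{-\alpha-1}$. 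The condition $2\alpha > 1$ is precisely what guarantees that the sum of these curvature contributions strictly exceeds the squared cross-coupling, so the determinant is positive.

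The main obstacle I foresee is bookkeeping: the cross-partials of $\Pi_s$ and $\Pi_e$ produce a dense algebraic expression whose sign is not visually obvious, and the delicate point is to regroup terms so that $(2\alpha-1)$ appears as an overall multiplicative factor of the potentially negative remainder. I would do this by factoring out the positive denominators in (\ref{Eq:SSPfirstorder}) and (\ref{Eq:ESPfirstorder}), writing everything in terms of the two basic quantities $A = x^{*-\alpha-1}$ and $B = \tfrac{t^{1-\beta}}{1-\beta}(1-x^*)^{-\alpha-1}$, and then checking that the resulting quadratic form in $A,B$ is positive whenever $2\alpha - 1 > 0$ together with conditions~(\ref{Eq:Condition1})--(\ref{Eq:Condition3}). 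Once diagonal strict concavity is verified, Rosen's theorem delivers uniqueness of the Nash equilibrium of $\mathcal{G}^{c}$.
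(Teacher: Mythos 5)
Your proposal follows essentially the same route as the paper: the paper's own proof is likewise only a sketch that asserts negative definiteness of the Jacobian of the pseudo-gradient mapping, concludes diagonal strict concavity, and invokes Rosen's uniqueness theorem, deferring all algebraic details to a companion reference. Your plan spells out more of the intended computation (the cross-partials and where $2\alpha-1>0$ should enter) but does not diverge in strategy from what the paper does.
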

		\begin{proof}[Proof Sketch]
		We mainly prove that the Jacobian matrix of point-to-set mapping with respect to the profit profile of both the SCSP and ECCSP is negative definite. Consequently, $\nabla {\bf{F}}$ is diagonally strictly concave, from which the uniqueness of the Nash equilibrium in the non-cooperative sub-game $\mathcal{G}^{c}$ is guaranteed~\cite{han2012game}. Please refer to~\cite{xiong2017jsac} for more details.
		\end{proof}

	\subsection{Stage I: WNO's pricing strategy}\label{Subsec:StageI}
	In Stage~I, the monopolist WNO determines the optimal price $p^*$ by solving the {\bf Problem 3} with the optimal sponsorship factor $\theta^*$ as well as the optimal caching effort $t^*$ obtained at Stage~II, and the optimal content demand $x^*$ obtained at Stage~I. In this case, the {\bf Problem~3} can be reformulated as follows:
		\begin{align}\label{Eq:NewProblem3}
		 \text{maximize} \quad &\null {\mathscr P}(p) = px^* - w {x^*}^2 \\
		 \text{subject to} \quad &\null 0\le p\le \overline p, \label{Eq:1constraint}\\
		 & {\theta ^*} = \arg \max_\theta {\Pi _s}, \label{Eq:2constraint}\\
		 & {t^*} = \arg \max_t {\Pi_e}, \label{Eq:3constraint}\\
		 & {\sigma _c}{{x^*}^{ - \gamma }} - \theta^* p > 0,\label{Eq:4constraint}\\
		 & - {{x^*}^{ - \alpha - 2}} + \frac{{{{t^*}^{1 - \beta }}}}{{1 - \beta }}{(1 - {x^*})^{ - \alpha - 2}}>0, \label{Eq:5constraint}\\
		 & {{x^*}^{ - \alpha }} - \frac{{{{t^*}^{1 - \beta }}}}{{1 - \beta }}{(1 - x^*)^{ - \alpha }} = \frac{1}{{\tau {\sigma _e}}}\left[ {\left( {1 - \theta^* } \right)p - c} \right] \label{Eq:6constraint}.
		\end{align}
	The constraint in~(\ref{Eq:1constraint}) applies for the strategy space of the WNO. The constraints in~(\ref{Eq:2constraint}) and~(\ref{Eq:3constraint}) indicate that $\theta^*$ and $t^*$ denote the best responses of the SCSP and ECCSP, respectively, given the price $p$. The constraint in~(\ref{Eq:4constraint}) is derived from the best response of the MU given $\theta$, $t$ and $p$, which represents the implicit function of $x^*(\theta, t, p)$. The constraints in~(\ref{Eq:5constraint}) and~(\ref{Eq:6constraint}) ensure the existence and the uniqueness of the non-cooperative sub-game~${\cal G}_c$, which are given in Proposition~\ref{prop.2}.

	Let $\cal X$ denote $x^*(\theta, t, p): {\cal D}_\theta {\otimes} {\cal D}_t {\otimes} {\cal D}_p \mapsto {\cal D}_x$, where ${\otimes}$ denotes the Cartesian product, ${\cal D}_\theta$, ${\cal D}_t$ and ${\cal D}_p$ represent the domains of $\theta$, $t$ and $p$, respectively. Note that these domains are all close sets. Further, we define $\Pi (\theta ,t,p)$ and $g(\theta ,t,p)$ as follows: $\Pi (\theta ,t,p) = \left[ {\begin{array}{*{20}{c}}
	{{\sigma _c}h({\cal X}) - \theta p{\cal X}}\\
	{{\sigma _c}h(1 - {\cal X}) - Ct}
	\end{array}} \right]$, $g(\theta ,t,p) =\left[ \begin{array}{*{20}{c}}
	{ - {\sigma_c}{{\cal X}^{ - \alpha }} + \theta p + \upsilon }\\
	{{\cal X}^{ - \alpha - 2}} - \frac{{{t^{1 - \beta }}}}{{1 - \beta }}{{(1 - {\cal X})}^{ - \alpha - 2}} + \upsilon \end{array} \right]$, where $\upsilon$ is a small number. In addition, we define ${\bm \rho}=[\theta; t]: {\cal D}_\theta {\otimes} {\cal D}_t \mapsto {\cal D}_{\bm \rho}$. As a result, {\bf Problem~3} can be redefined as the bilevel programming problem, which is shown as follows:
		\begin{equation}\label{Eq:NewProblem4}
		\begin{aligned}
		& \underset{0<p<\overline p}{\text{maximize}}
		& & {\mathscr P}(p) = p {\cal X} - w {\cal X}^2\\
		& \text{subject to}
		& & {{\bm {\bm \rho}^*}} = \arg \max_{\bm \rho} {\Pi({\bm \rho},p)},\\
		& & &{g({\bm \rho},p)} \le 0,\\
		& & & {\bm \rho} \in {\cal D}_{\bm \rho}.
		\end{aligned}
		\end{equation}

	As shown in Section~\ref{Subsec:StageII}, we have proved the existence of a unique pair of $\theta^*$ and $t^*$ for any given $p$ as the optimal solution of {\bf Problem~2}, i.e., the Nash equilibrium in non-cooperative sub-game~${\cal G}_c$. Accordingly, the optimal solution to the lower-level programming problem $\bm{{\bm {\bm \rho}^*}} = [{\theta ^*};{t^*}]$ exists and is unique, for any given $p$. Therefore, the strong sufficient optimality condition of second order (SSOSC) is satisfied for the bilevel programming problem in~(\ref{Eq:NewProblem4}) because of the existence and the uniqueness of ${\bm {\bm \rho}^*}$ (Theorem~3.9 in~\cite{dempe2015bilevel}). This indicates that the optimal solution to the lower-level programming problem of the bilevel programming problem is strongly stable. Thus, our bilevel programming problem can be reduced to a single-level problem, which is expressed as follows:
		\begin{equation}\label{Eq:NewProblem5}
		\begin{footnotesize}
		\begin{aligned}
		& \underset{0<p<\overline p}{\text{maximize}}
		& & {\mathscr P}(p) = p {\cal X} - w {\cal X}^2,\\
		& \text{subject to}
		& & {{\bm \rho}} = U(p).\\
		\end{aligned}
		\end{footnotesize}
		\end{equation}
		${{\bm \rho}} =U(p)$ in the constraint can be obtained by using the KKT condition to the lower-level programming problem of the bilevel programming problem as follows:
		\begin{equation}
		\left\{ {\begin{array}{*{20}{c}}
		{{\nabla _{\bm \rho} }\Pi ({\bm \rho} ,p) - {\lambda ^\top}{\nabla _{\bm \rho} }g({\bm \rho} ,p) = 0},\\
		{0 \le \lambda \bot - g({\bm \rho} ,p) \ge 0},
		\end{array}} \right.
		\end{equation}
	where $\lambda$ denotes the Lagrangian multiplier vector. Moreover, the feasible domain of the single-level programming is defined as $\Omega (p,{\bm \rho} ) = \left\{ {\left. {(p,{\bm \rho} )} \right|{\bm \rho} = U(p),p \in {{\cal D}_p}} \right\}$, which is a non-empty and closed set according to the Weierstrass Theorem~\cite{sundaram1996first}. Since we know that the optimal solution to the lower-level programming problem of the bilevel programming problem is unique, Constant Rank Constraint Qualification (CRCQ) and Mangasarian-Fromovitz Constraint Qualification (MFCQ) are satisfied by all the feasible points in $\Omega (p,{\bm \rho} )$ (Theorem 3.9 in~\cite{dempe2015bilevel}). Based on Theorem 4.10 in~\cite{dempe2002foundations}, ${\bm \rho}=U(p)$ is a piecewise continuously differentiable function and $(p,{\bm \rho})=(p, U(p))$ is therefore continuous on $p$. Further, with the closed sets $\Omega (p,{\bm \rho} )$ and ${\cal D}_p$, according to the well-known Closed Graph Theorem, we can conclude that ${\bm \rho} = U(p)$ being continuous implies that the mapping ${\cal D}_p \mapsto \Omega (p,{\bm \rho} )$ is closed. Therefore, $\Omega (p,{\bm \rho} )$ is non-empty and closed, and thus the bilevel programming problem admits a globally optimal solution, i.e., the Stackelberg equilibrium. Accordingly, we conclude with the following proposition.
		\begin{proposition}
		There exists at least one Stackelberg equilibrium in the proposed three-stage Stackelberg game.
		\end{proposition}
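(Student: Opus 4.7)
The plan is to exploit the backward-induction structure of the game to recast it as a bilevel optimization problem and then invoke standard existence results for such programs. First, I would collapse Stages~II and~III into a single lower-level problem parametrized by the WNO's price $p$: by the concavity argument of Section~\ref{Subsec:StageIII} the MU best response $x^*(\theta,t,p)$ is uniquely determined (and implicitly defined through~(\ref{Eq:UtilityEquilibriumForm})), and by Proposition~\ref{prop.2} the SCSP--ECCSP sub-game admits a unique Nash equilibrium $(\theta^*(p),t^*(p))$. This lets me treat $\bm\rho^*(p) := [\theta^*(p);t^*(p)]$ as a single-valued mapping of $p$, so the WNO's problem reduces to $\max_{p\in[0,\overline p]}\mathscr{P}(p)$ with $\bm\rho = \arg\max_{\bm\rho}\Pi(\bm\rho,p)$ imposed as a lower-level equilibrium constraint.

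Next, I would turn the bilevel program into a single-level one by replacing the inner optimization by its KKT system, which is necessary and sufficient owing to the strict concavity verified via~(\ref{Eq:SSPsecondorder}) and~(\ref{Eq:ESPsecondorder}). This produces a parametric solution map $\bm\rho = U(p)$ together with complementary slackness on the Lagrange multiplier for $g(\bm\rho,p)\le 0$. To argue that $U(\cdot)$ inherits enough regularity, I would invoke the uniqueness of the lower-level optimum to verify the Strong Sufficient Optimality Condition of Second Order (SSOSC), and then check that the Constant Rank Constraint Qualification (CRCQ) and the Mangasarian--Fromovitz Constraint Qualification (MFCQ) hold at every feasible point. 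The goal here is to apply Dempe's stability results (Theorems~3.9 and~4.10 of the cited monographs) and conclude that $U$ is piecewise continuously differentiable, hence continuous, on $[0,\overline p]$.

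Finally, once continuity of $U$ is in hand, I would close the argument with a Weierstrass-type compactness argument. The feasible set $\Omega(p,\bm\rho)=\{(p,\bm\rho):\bm\rho=U(p),\,p\in[0,\overline p]\}$ is the graph of a continuous map over a compact interval, and by the Closed Graph Theorem it is non-empty and closed. The objective $\mathscr{P}(p) = p\,{\cal X} - w\,{\cal X}^2$ is continuous in $p$ because ${\cal X}=x^*(\theta^*(p),t^*(p),p)$ is a composition of continuous maps. A continuous function on a non-empty compact set attains its maximum, and that maximizer, together with the induced $\bm\rho^*$ and $x^*$, is the sought Stackelberg equilibrium.

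The hard part will be the constraint-qualification step (SSOSC/CRCQ/MFCQ), because the lower-level problem does not admit closed-form dependencies on $(\theta,t)$: $x^*$ is only defined implicitly through~(\ref{Eq:UtilityEquilibriumForm}). I would handle this by using the implicit function theorem (which already underlies the derivations of~(\ref{Eq:SSPfirstorder})--(\ref{Eq:ESPsecondorder})) to guarantee that the Jacobians of the active constraints are well-defined and of constant rank in a neighbourhood of every feasible point, and by exploiting the interior structure of $\bm\rho\in[0,1]^2$ to ensure the linear-independence direction required by MFCQ. Uniqueness of $\bm\rho^*(p)$, established in Proposition~\ref{prop.2}, then provides the strong stability that legitimizes the reduction to the single-level problem and the final application of Weierstrass.
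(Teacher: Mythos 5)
Your proposal follows essentially the same route as the paper: recasting the WNO's problem as a bilevel program whose lower level is the SCSP--ECCSP equilibrium, using uniqueness of the lower-level solution to invoke SSOSC and the CRCQ/MFCQ conditions from Dempe's theorems, reducing to a single-level problem via the KKT system with $\bm\rho = U(p)$ piecewise continuously differentiable, and closing with the Closed Graph Theorem and a Weierstrass compactness argument. Your added remarks on verifying the constraint qualifications via the implicit function theorem are a reasonable elaboration of a step the paper handles only by citation, but they do not change the structure of the argument.
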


	Each stage of the proposed three-stage Stackelberg game has been investigated so far. Similar to that in~\cite{zhang2017hierarchical}, we then present the sub-gradient based algorithm to obtain the Stackelberg equilibrium of the proposed game. The convergence property of the sub-gradient based iterative algorithm has been proved in~\cite{zhang2017hierarchical}.

\section{Performance Evaluation}\label{Sec:Simulation}
We employ simulations to evaluate the network participant performance metrics in the proposed joint sponsored and edge caching content service market, with default network parameters set as follows:  $\alpha = 0.8$, $\beta = 0.5$, $\gamma = 0.8$, $l_a = 1$, $\sigma_e = 40$, $\sigma_c = 120$, $c = 80$, $C = 120$, $w = 1$ and $\overline p =100$.

Firstly, the impact of the price constraint on the WNO is investigated, as in Fig.~\ref{Fig:Performance_price_bar}. It is worth noting that the optimal price offered by the WNO is the same as the maximum price constraint. The intuition is that the lower price can attract the MU to consume more sponsored content from the SCSP, which may incur higher delivery cost maintained by the WNO. Thus, the WNO is reluctant to lower the offered price. In addition, we find that as the price constraint, i.e., the optimal price increases, the payoff of the WNO increases and the sponsored content demand of the MU decreases. This is because the WNO is able to set a higher price and extract more surplus from the MU, and achieves higher payoff consequently. Further, as expected, the payoff of the WNO decreases with the increase of $w$, i.e., the unit cost of content delivery.

	\begin{figure}[t]
	\centering
	\includegraphics[width=.45\textwidth]{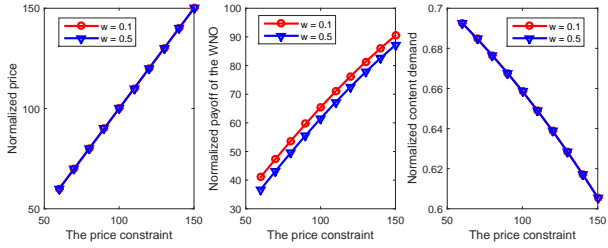}
	\caption{\small{The impact of the price constraint $\overline p$ on the WNO from joint sponsored and edge caching content service market.}}\label{Fig:Performance_price_bar}
	\end{figure}

	\begin{figure}[t]
	\centering
	\includegraphics[width=.45\textwidth]{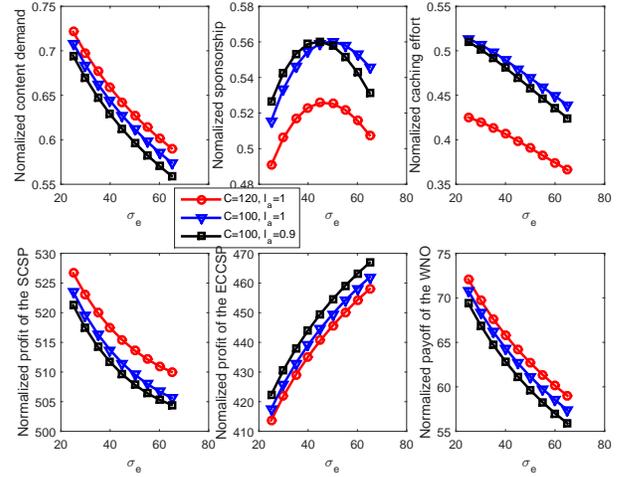}
	\caption{\small{The impact of the utility level of MUs on the players (MUs, SCSP, ECCSP, and WNO) from joint sponsored and edge caching content service market.}}\label{Fig:Performance_cegma_e}
	\end{figure}

We next study the impact of the utility coefficient of MUs on the players in the game model, as illustrated in Fig.~\ref{Fig:Performance_cegma_e}. We find that the sponsored content demand of the MU decreases with the increase of $\sigma_e$. This is because when $\sigma_e$ increases, the sponsorship fee from the SCSP becomes relative lower compared with the improved utility of the MU from consuming the content. Recall that the caching effort positively affects the utility of the MU from consuming the content, therefore, the sponsored content demand of the MU decreases. As a result, the SCSP wants to offer more sponsorship fee to attract the MU to consume the sponsored content, and the ECCSP has an incentive to lower its caching effort to reduce the caching cost. Once the sponsorship fee from the SCSP is large enough, the SCSP is not willing to offer more sponsorship fee to save its cost. Accordingly, the sponsorship fee from the SCSP increases first and then decreases. Therefore, the SCSP's profit decreases and the ECCSP's profit increases, which is consistent with the results in Fig.~\ref{Fig:Performance_cegma_e}. Since the sponsored content demand of the MU decreases, and thus the payment from the MU to the WNO is reduced, which leads to the decrease of the WNO's payoff. As expected, we find that the sponsored content demand of the MU increases and the caching effort of the ECCSP decreases as the content caching cost $C$ increases. This is due to the fact that, with the increase of $C$, the cost of the ECCSP for increasing caching effort becomes greater. In this case, the ECCSP is willing to reduce its caching effort for saving cost and thus compensate for its decreasing profit. Meanwhile, the SCSP wants to offer higher sponsorship fee to the MU for encouraging the MU's higher sponsored content demand. Consequently, both the SCSP's profit and the WNO's payoff increase. Moreover, from Fig.~\ref{Fig:Performance_cegma_e}, we find that the increase of the advertisement length $l_a$ leads to the increase of the SCSP's profit and the WNO's payoff. This is because when $l_a$ increases, the utility of the MU from consuming the content decreases, and the offered sponsorship fee becomes significant for the MU. This encourages higher sponsored content demand of the MU, which benefits the SCSP and the WNO accordingly. Therefore, the profit of the ECCSP decreases because of its decreasing content traffic.

	\begin{figure}[t]
	\centering
	\includegraphics[width=.45\textwidth]{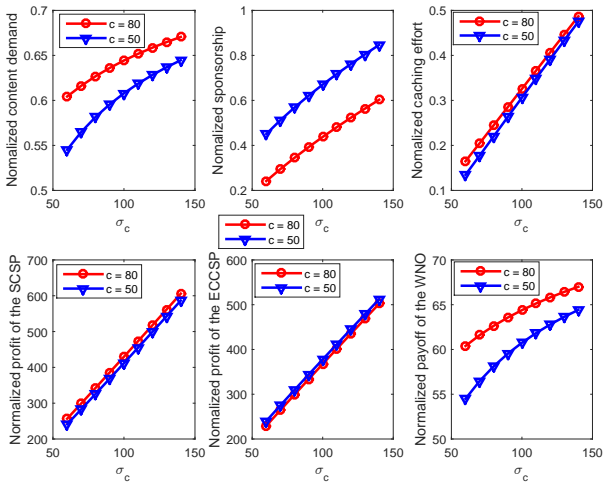}
	\caption{\small{The impact of the advertisement revenue coefficient on the players (MUs, SCSP, ECCSP, and WNO) from joint sponsored and edge caching content service market.}}\label{Fig:Performance_cegma_c}
	\end{figure}

Lastly, we examine the impact of the advertisement revenue coefficient $\sigma_c$ on the players in the game model, as illustrated in~Fig.~\ref{Fig:Performance_cegma_c}. We find that the profit of the SCSP, the profit of the ECCSP, and payoff of the WNO increases with the increase of $\sigma_c$. This is because the advertisement revenue from increasing content traffic is improved as $\sigma_c$ increases. Thus, the profit of both the SCSP and ECCSP are improved. In this case, the SCSP and ECCSP are more willing to increase the sponsorship fee and caching effort, respectively, to attract more demand from the MU. Recall that the optimal price is equal to the price constraint. As the price is fixed, with the increase of $\sigma_c$, the SCSP has an incentive to offer a higher sponsorship fee to the MU since its cost becomes insignificant compared with the increasing advertisement revenue coefficient. Thus, the sponsored content demand of the MU increases, which leads to the increase of the WNO's payoff consequently. By comparing curves with different value of handover cost $c$, we find that the decrease of $c$ leads to the decrease of the sponsored content demand, the SCSP's profit and the WNO's payoff. This is due to the fact that the MU is reluctant to choose the cached content from the ECCSP when the handover cost is high. Consequently, the MU is more willing to access and consume the content from the SCSP using. Due to the increasing content traffic, the WNO's payoff is improved accordingly. Utilizing this fact, the SCSP also has an incentive to reduce its sponsorship fee for saving the cost and achieving higher profit. However, the ECCSP needs to increase its caching effort to compensate for the higher handover cost. Therefore, the profit of the ECCSP decreases as shown in~Fig.~\ref{Fig:Performance_cegma_c}.

\section{Conclusion}\label{Sec:Conclusion}
In this work, we have considered a three-stage Stackelberg game based market model for a joint sponsored and edge caching content service. The model describes the interactions among the wireless operator, the sponsored content provider, and the edge provider. The analytical form of the sub-game perfect equilibrium in each stage has been analyzed by backward induction. We have also validated the existence of the Stackelberg equilibrium by capitalizing on the bilevel optimization technique. Furthermore, a sub-gradient based iterative algorithm has been proposed, which converges to the Stackelberg equilibrium with certainty. Numerical results have been proposed to evaluate the performance of all the game participants. Further studies of this work will focus on the social interactions among the users such as~\cite{nie2017socially}, or the dynamic pricing scheme such as~\cite{xiong2017network} in the context of sponsored content.

\section*{Acknowledgement}
This work was supported in part by WASP/NTU M4082187 (4080), Singapore MOE Tier 1 under Grant 2017-T1-002-007 RG122/17, MOE Tier 2 under Grant MOE2014-T2-2-015 ARC4/15, NRF2015-NRF-ISF001-2277, EMA Energy Resilience under Grant NRF2017EWT-EP003-041, ISF-NRF grant 2277/16, and in part by National Natural Science Foundation of China (No. 61601336).

\bibliography{bibfile}
\end{document}